\begin{document}
\title{Axiomatizing approximate inclusion}
%
%\titlerunning{Abbreviated paper title}
% If the paper title is too long for the running head, you can set
% an abbreviated paper title here
%
\author{Matilda Häggblom \orcidID{0009-0003-6289-7853}}
\authorrunning{M. Häggblom}
% First names are abbreviated in the running head.
% If there are more than two authors, 'et al.' is used.
%
\institute{University of Helsinki, Finland \\
\email{}}
\maketitle              % typeset the header of the contribution
\begin{abstract}
We introduce two approximate variants of inclusion dependencies and examine the axiomatization and computational complexity of their implication problems. The approximate variants allow for some imperfection in the database and differ in how this degree is measured. One considers the error relative to the database size, while the other applies a fixed threshold independent of size. We obtain complete axiomatizations for both under some arity restrictions. In particular, restricted to unary inclusion dependencies, the implication problem for each approximate variant is decidable in PTIME. We formalise the results using team semantics, where a team corresponds to a uni-relational database. 

\keywords{Inclusion atom  \and Approximate dependency \and Uncertainty \and Implication problem \and Team semantics \and Computational complexity}
\end{abstract}
\section{Introduction}

In practice, databases often contain errors, and dependency relations such as functional and inclusion dependencies might not hold absolutely. It is therefore meaningful to consider their approximate variants. We define two approximate inclusion dependencies and study the axiomatizations and computational complexity of their implication problems. 

We present the results using team semantics \cite{hodges1997,hodges19972,vaananen2007}, where a team is a set of assignments corresponding to a uni-relation database. Inclusion dependencies are one of the most used dependency classes, and were adapted to team semantics in \cite{Galliani2014} as inclusion atoms $x\subseteq y$, describing that the values of $x$ are among the values of $y$. The results are thus immediately transferable to the database setting.

Dependence and exclusion atoms are both downward-closed, meaning that if a team satisfies an atom, so do its subteams. Their approximate versions were defined and axiomatized in \cite{Vaananen2017,haggblom2024}. The notion of approximation for downward-closed atoms naturally corresponds to finding a large enough subteam that satisfies the atom. For non-downward-closed atoms, the notion of approximation is less obvious. We extend the research program on approximate dependencies and their proof systems to non-downward-closed atoms. 

%%%

Inclusion atoms are not downward-closed, but union-closed: if an atom is satisfied by multiple teams, it is satisfied by their union. This begs the question of what an appropriate notion of approximation is, since notably, inclusion atoms are not upward-closed, i.e., extending a team can make the inclusion fail. We therefore propose that the approximate version of an inclusion atom $x\subseteq y$ should allow some number of values for $x$ to be missing from $y$. We introduce \emph{quantity approximate inclusion} atoms $x\subseteq_n y$, where $n$ values are allowed to be missing, and \emph{ratio approximate inclusion} atoms $x\subseteq_p y$, where $p$ is a rational number corresponding to a number of values relative to the size of the team.

Let us compare the two definitions with an example. Consider a team with 40 assignments over the variables $x:=FRESHMEN$ and $y:=REGISTERED$, listing $40$ freshmen of whom $10$ are not yet registered. Now the team satisfies the quantity and ratio approximate inclusion atoms $x\subseteq_{10} y$ and $x\subseteq_{\frac{1}{4} }y$. In practice, this information can be used to determine if a reminder to register needs to be sent to the students.

% See \Cref{Ex team} for an example. 

We consider the implication problem for each approximate variant over finite teams: does a set of atoms imply a given atom? For quantity approximate inclusion, we extend the system for inclusion atoms in \cite{CASANOVA198429} to a complete proof system for finite assumption sets in which the atoms' arities do not exceed that of the conclusion atom. Additionally, we establish that the decision problem remains PSPACE-complete, like for usual inclusion atoms \cite{CASANOVA198429}.

Similarly, we obtain the proof system for ratio approximate inclusion atoms by essentially adding the approximations to the system in \cite{CASANOVA198429}. It is not trivial that such an approach produces sound rules when the approximations are dependent on the size of the team, as we discuss in \Cref{conclusion}. With the carefully chosen definition, we show completeness for finite sets of unary ratio approximate inclusion atoms, and that their decision problems can be solved in PTIME. Unary inclusion atoms are often favoured in practice, due to better complexity results (see, e.g., \cite{cosmadakis}), which can now be extended to their approximate counterparts.  

%
% %the aforementioned PSPACE-completeness result,
% \vspace{-.15cm} \begin{table}[ht]
%      \centering
%      \caption{Example team where the quantity- and ratio approximate inclusion atoms $x\subseteq_1 y$ and $x\subseteq_{\frac{1}{4} }y$ are satisfied, since only one freshman is not yet registered.}
%      \label{Ex team}\vspace{-.5cm}
%  \[
% \begin{array}{c c}
% \toprule 	
% \, Freshman\, (x) \, \, & \,\, Registered\, (y) \,\\ 
% \midrule 
% A & C \\

% B & B \\

% C & A \\

% D & F \\
% \bottomrule 
% \end{array}
% \]\vspace{-.5cm}
%  \end{table}
% -- add example team for quantity approximate inclusion
%
% --check keywords and abstract identical to easychair
%
% :: Infinite sets in R completeness proof ?
\vspace{-.25cm}
\section{Preliminaries}
A team $T$ is a set of assignments $s:\mathcal{V}\longrightarrow M$, where $\mathcal{V}$ is a set of variables and $M$ is a set of values. In this paper, we assume that teams are finite. We write %$x_i,y_i,\dots$ for individual variables and 
$x,y,\dots$ for finite (possibly empty) sequences of variables, with $|x|, |y|,\dots$ denoting their lengths. 
%Given a sequence $x$, we denote its i:th variable by $x_i$. Let $x= x_1\dots x_n$ and $y= y_1\dots y_m$. 
 For $x=x_1\dots x_n$ and $y=y_1\dots y_m$,  $xy$ denotes the sequence $x_1\dots x_n y_1\dots y_m$. We write $s(x)$ as shorthand for the tuple $ s(x_1)\dots s(x_n)$. %The sequences $x$ and $y$ are equal if and only if they are of the same length, denoted $|x|=|y|$, and $x_1=y_1,\dots, x_n=y_n$. 

Inclusion atoms are of the form $x\subseteq y$, where $|x|=|y|$ is the arity of the atom. We assume that no variable is repeated within the respective sequences $x$ and $y$.
%We often refer to the variables in $x$ as the variables on the left-hand side, etc. 
We recall the semantics of inclusion atoms,
%(see, e.g., \cite{GALLIANI201268}), 
\begin{equation*}
     T\models x\subseteq y \text{ if and only if for all } s\in T \text{ there exists } s^\prime\in T : s(x)=s^\prime(y). 
\end{equation*}
Equivalently, we have that $T\models x\subseteq y$ if and only if $T[x]\subseteq T[y]$, where the value-sets are defined by $T[z]:=\{s(z)\mid s\in T\}$. %Note that if $x$ is a sequence, then $T[x]$ consists of value sequences of length $|x|$. 

It follows from the definition that inclusion atoms are union-closed: if $T_i\models x\subseteq y$ for each $i$ in a nonempty index set $I$, then $\bigcup_{i\in I}T_i\models x\subseteq y$, and have the empty team property: all inclusion atoms are satisfied by the empty team.

For a set of atoms $\Sigma$, we write $\Sigma\models x \subseteq y$ if for all teams $T$ such that $T\models u\subseteq v$ for all $u\subseteq v\in\Sigma$, also $T\models x\subseteq y$. 
%We call inclusion atoms of the form $x\subseteq x$ \emph{trivial}, since they are satisfied in all teams. 

% \begin{definition} Inclusion atoms (repetition-free) are of the form $x\subseteq y$, with $|x|=|y|$ and no variable occurring more than once within the respective sequences $x$ and $y$. We recall their semantics, 
%  $$T\models x\subseteq y \text{ iff for all } s_1\in T \text{ there exists } s_2\in T \text{ such that } s_1(x)=s_2(y).$$
% \end{definition}

\section{Quantity approximate inclusion}
We introduce \emph{quantity approximate inclusion atoms} $x\subseteq_n y$ that allow at most $n$ values of $x$ to be missing from $y$, independent of the size of the team.
\begin{definition}
    For $n\in\mathbb{N}$,  $T\models x\subseteq_n y$ if and only if $|T[x]\setminus T[y]|\leq n$. 
    %for all but at most $n$-many $a\in T[x]$, the value $a$ belongs to $T[y]$.
\end{definition}

Note that atoms with approximation $0$ coincide with the usual inclusion atoms. It also follows from the definition that quantity approximate inclusion atoms have the \emph{weak union closure property}: If $T\models x\subseteq_n y$ and $T'\models x\subseteq_m y$ then $T\cup T'\models x\subseteq_{n+m} y$. Next, we define a sound set of rules for the atoms. 
%
%It is easy to see that if $T_1$ and $T_2$ have no common values, we obtain the \emph{disjoint union closure property}: If $T_1\models x\subseteq_n y$ and $T_2\models x\subseteq_m y$ then $T_1\cup T_2\models x\subseteq_{n+m} y$.
%
\begin{definition}
The rules for quantity approximate inclusion atoms are $Q1$-$Q5$.
\begin{enumerate}[align=left]
\item[$(Q1)$] $x\subseteq_0 x$.

\item[$(Q2)$] If $x\subseteq_n z$ and $z\subseteq_m y$, then $x\subseteq_{n+m} y$.

\item[$(Q3)$] If $xyz\subseteq_n uvw$, then $xzy\subseteq_n uwv$.$^*$ \hfill $^*$$|x|=|u|$ and $|y|=|v|$  

\item[$(Q4)$] If $xy\subseteq_n uv$, then $x\subseteq_n u$. 

\item[$(Q5)$] For $m\geq n$, if $x\subseteq_n y$, then $x\subseteq_m y$. 
\end{enumerate}  
\end{definition}

\begin{lemma}[Soundness]
The rules $Q1$-$Q5$ are sound.   
\end{lemma}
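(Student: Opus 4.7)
The plan is to verify soundness rule by rule, working with the equivalent characterization $T \models x \subseteq_n y \iff |T[x] \setminus T[y]| \leq n$, so that each rule reduces to an elementary fact about set differences on the projected value-sets.

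Rules $Q1$ and $Q5$ are immediate: $T[x] \setminus T[x] = \emptyset$ has size $0$, and the cardinality bound in $Q5$ only weakens. For $Q2$ (transitivity), I would use the set-theoretic inclusion
\begin{equation*}
T[x] \setminus T[y] \;\subseteq\; (T[x] \setminus T[z]) \,\cup\, (T[z] \setminus T[y]),
\end{equation*}
obtained by case-splitting on whether an element of $T[x] \setminus T[y]$ lies in $T[z]$; taking cardinalities gives the bound $n+m$.

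For $Q3$ (permutation), the key point is that for any assignment $s$, the tuple $s(xzy)$ is obtained from $s(xyz)$ by a fixed permutation $\pi$ of coordinate positions (and analogously $s(uwv)$ from $s(uvw)$ via the same $\pi$, thanks to the side conditions $|x|=|u|$, $|y|=|v|$). So $T[xzy] = \pi(T[xyz])$ and $T[uwv] = \pi(T[uvw])$, and since $\pi$ acts as a bijection on tuples, $|T[xzy] \setminus T[uwv]| = |T[xyz] \setminus T[uvw]| \leq n$. For $Q4$ (projection), I would exhibit an injection $\iota : T[x] \setminus T[u] \to T[xy] \setminus T[uv]$: for each $a \in T[x] \setminus T[u]$ pick some $b$ with $ab \in T[xy]$ (such $b$ exists since $a \in T[x]$) and set $\iota(a) = ab$; if $ab$ were in $T[uv]$ then $a$ would lie in $T[u]$, a contradiction, so $\iota(a) \in T[xy]\setminus T[uv]$, and $\iota$ is injective because its value already determines its first $|x|$ coordinates. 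Cardinality then gives the bound.

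I do not expect any genuine obstacle; the only place requiring care is $Q3$, where one must keep the arities aligned so the coordinate permutation on the left agrees with the one on the right, and $Q4$, where one must be explicit that a witnessing extension $b$ exists and that its existence yields an injection rather than just a surjection in the wrong direction. Every other step is a one-line cardinality computation.
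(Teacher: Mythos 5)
Your proof is correct and, for the one rule the paper actually spells out ($Q2$), uses exactly the same triangle-inequality argument on the value-set differences. The remaining rules, which the paper omits, are handled with the standard arguments (coordinate permutation as a bijection for $Q3$, an injection of missing values for $Q4$), so nothing essentially new or divergent is going on.
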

\begin{proof}
We give the proof for Q2. Let $T\models x\subseteq_n z$ and $T\models z\subseteq_m y$. Then $|T[x]\setminus T[z]|\leq n$ and  $|T[z]\setminus T[y]|\leq m$. Now $|T[x]\setminus T[y]|\leq |T[x]\setminus T[z]|+|T[z]\setminus T[y]|\leq n+m$, and we conclude $T\models x\subseteq_{n+m} y$.
\end{proof}

We show completeness under the assumption that the arities of the atoms in the assumption set do not exceed the arity of the conclusion. Hence, $Q4$ can be omitted. We construct counterexample subteams that each uses two unique values, as in the completeness proof for usual inclusion atoms in \cite{haggblom2025}. The union of the counterexample subteams then forms the final counterexample team. One such instance is illustrated in \Cref{Q team}.

%by the   weak union closure property.

\begin{theorem}[Completeness]
Let $\Sigma\cup\{x\subseteq_n y\}$ be a finite set of quantity approximate inclusion atoms with the arities of the atoms in $\Sigma$ being at most that of $l:=|x|$. If $\Sigma\models x\subseteq_n y$, then $\Sigma\vdash x\subseteq_n y$ using rules from $\{Q1, Q2, Q3,Q5\}$.\label{compl ratio}
\end{theorem}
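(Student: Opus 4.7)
The proof is by contrapositive. Assuming $\Sigma \not\vdash x \subseteq_n y$, I plan to construct a team $T$ with $T \models \Sigma$ but $|T[x] \setminus T[y]| \geq n+1$. Following the hint and the construction for classical inclusion atoms in \cite{haggblom2025}, the team $T$ will be the union of $n+1$ subteams $T_0, \ldots, T_n$, where each $T_i$ uses its own pair of unique values $a_i, b_i$, pairwise disjoint across the subteams. The combinatorial payoff of this value-disjointness is that for every tuple pair $(u,v)$ the missing count decomposes as $|T[u] \setminus T[v]| = \sum_i |T_i[u] \setminus T_i[v]|$; in particular, a missing $x$-tuple in any individual $T_i$ contributes a distinct element to $T[x] \setminus T[y]$.

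Each subteam $T_i$ is built by a chase-like procedure over the two values $a_i, b_i$: start from a single assignment that maps $x$ to a target tuple $t_i \in \{a_i, b_i\}^l$, then iteratively add assignments to witness the inclusions in $\Sigma$, using only the values $a_i, b_i$. The targets $t_0, \ldots, t_n$ are chosen so that each $T_i$ leaves $t_i$ missing from $T_i[y]$, which after unioning produces $n+1$ distinct elements of $T[x] \setminus T[y]$. A useful preliminary observation is that, with $Q4$ absent and the arity restriction in place, any derivation of $x \subseteq_k y$ from $\Sigma$ decomposes via $Q2$ into a walk $x = z_0, z_1, \ldots, z_r = y$ through tuples of length $l$ whose step approximations sum to at most $k$; so the non-derivability $\Sigma \not\vdash x \subseteq_n y$ becomes a concrete \emph{long-walk} condition on $\Sigma$ that can be used to justify each local chase.

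The main obstacle is handling the non-zero approximations in the atoms of $\Sigma$. The classical 2-value chase in \cite{haggblom2025} treats each hypothesis $u \subseteq v$ as a strict inclusion, which is stronger than an approximate atom $u \subseteq_m v$ requires and, if run naively, could force assignments that cover a target tuple under $y$. The intended resolution is to distribute a budget of $m$ missing tuples across the $n+1$ subteams for each atom $u \subseteq_m v \in \Sigma$: since the missing counts add across value-disjoint subteams, the union still satisfies $u \subseteq_m v$, while each individual $T_i$ gains enough slack to complete its chase without ever producing an $s \in T_i$ with $s(y) = t_i$. Aligning this budget distribution with the walk characterization, and arguing that the single hypothesis $\Sigma \not\vdash x \subseteq_n y$ suffices to run $n+1$ such chases simultaneously, is the crux of the argument.
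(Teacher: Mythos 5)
Your scaffolding agrees with the paper's: the counterexample team is a union of $n+1$ pairwise value-disjoint two-valued subteams, missing counts add across the union, and each subteam is responsible for one element of $T[x]\setminus T[y]$. But there is a genuine gap at exactly the point you flag as ``the crux'': you never specify which assignments each $T_i$ actually contains, i.e.\ which tuples are allowed to be missing from which variable sequences in which subteam, and without that specification the claim $T\models\Sigma$ cannot be checked. Your proposed organizing principle --- distributing, for each atom $u\subseteq_m v\in\Sigma$, a budget of $m$ missing tuples across the subteams, with each local chase consuming its share of slack --- is not the right one: the same sequence $v$ can occur on the right of several atoms (and on the left of others), so per-atom local slack gives no globally consistent team, and a chase seeded at a target tuple over a two-element domain has no a priori reason to terminate without covering $t_i$ under $y$. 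The walk decomposition you invoke is also only asserted, and as stated it ignores the permutation rule $Q3$.

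The paper resolves this by coordinating the forbidden tuples per variable sequence rather than per atom, and by avoiding the chase altogether: $T_i$ consists of \emph{all} assignments into the two values $\{a_i,b_i\}$ \emph{except} those mapping some length-$l$ sequence $w$ to the diagonal tuple $a_i^{\,l}$, where the exception is imposed precisely when $i\leq m_w$ for $m_w$ the least $m$ with $\Sigma\vdash x\subseteq_m w$. This yields $|T[x]\setminus T[w]|=\min(m_w,n+1)$ for every $w$ of length $l$; since $\Sigma\not\vdash x\subseteq_n y$ forces $m_y\geq n+1$, all $n+1$ diagonal tuples are missing from $y$, while for each $u\subseteq_k v\in\Sigma$ rule $Q2$ gives $m_v\leq m_u+k$ and hence $|T[u]\setminus T[v]|\leq m_v-m_u\leq k$. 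That single inequality is what your budget-distribution step would have to reproduce, and it is obtained from the minimal derivable approximations $m_w$, not from any per-atom allocation. If you replace your chase by this explicit ``everything except the forbidden diagonals'' definition and verify the counting identity, your outline becomes the paper's proof.
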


\begin{proof}
%\edit{And outside this setting, most likely the system is not complete, should be some fairly simple low arity example of this.}
%
Suppose that $\Sigma\not\vdash x\subseteq_n y$, then by $Q1$ and $Q5$, $x\neq y$. Let $i^l$ be the tuple $i\dots i$ of length $l$. We define the counterexample team $T=T_1\cup\dots\cup T_{n+1}$ by letting $s\in T_i$ if and only if the following conditions are satisfied:
\begin{enumerate}[label=(\arabic*)]
    \item  $s:V\longrightarrow \{i,i+\frac{1}{2}\}$.\label{cond 1}
    \item For the smallest $m$ such that $\Sigma\vdash x\subseteq_m w$, if $i\leq m$ then $s(w)\neq i^l$.
    %If $\Sigma\vdash w\subseteq_k y$ and there is no $m\leq n$ such that $\Sigma\vdash x\subseteq_m w$, then $s(w)\neq i^l$. 
    %If there is $m\leq n$ such that $\Sigma\vdash x\subseteq_m w$, then for the smallest such $m$, if $i\leq m$ then $s\in T_i$ implies $s(w)\neq i^l$.
%
 %    If $\Sigma\vdash w\subseteq_k y$ --- and there is $m\leq n$ such that $\Sigma\vdash x\subseteq_m w$, then for the smallest such $m$ require 
 % that if $s\in T_i$ then $s(w)\neq i^l$ for all $i=1,\dots,m$. (In the other subteams no value is removed for $w$.) ---- Else, let $s\in T_i$ only if $s(w)\neq i^l$ for all $i=1,\dots,n+1$.
    \label{cond 3}
\end{enumerate}

Note that a team $T_i$ satisfying only condition \ref{cond 1} satisfies all possible inclusion atoms, while condition \ref{cond 3} is a minimal intervention to make it a counterexample team. See \cite{haggblom2025} for details. Further, in each subteam $T_i$, no tuples are removed from variable sequences shorter than $l$, and at most one tuple, namely $i^l$, is removed from sequences of length $l$. However, for longer sequences $w$, the number of tuples removed can be $2^{|w|-l}$ (e.g. both $i^li$ and $i^l(i+\frac{1}{2})$), forcing the arity restriction.
% By the weak union closure property, any $w$ of arity $l$ can have at most $n+1$ many missing values. 

By the assumption $\Sigma\not\vdash x\subseteq_n y$ and condition \ref{cond 3}, the value $i^l$ is missing from $y$ in each subteam, hence $|T[x]\setminus T[y]|=n+1$ and $\Sigma\not\models x\subseteq_n y$.
It remains to show that for all $u\subseteq_k v\in\Sigma$, $T\models u\subseteq_k v$. The arguments are mostly analogous to the proof in \cite{haggblom2025}, and we include the only case specific to the approximations.

% By condition \ref{cond 3}, values are only removed from $w$ if $\Sigma\vdash w\subseteq_k y$. If $\Sigma\not\vdash x\subseteq_m w$, for all $m\leq n$, then all $i^l$ values are not in $T[y]$ or $T[w]$, so $T\models w\subseteq_0 y$. On the other hand, if there is some $m\leq n$ such that $\Sigma \vdash x\subseteq_m w$, consider the smallest such $m$. By the construction of the team, $|T[x]\setminus T[w]|=m$, i.e., $T\models x\subseteq_m y$. Further, to not derive a contradiction using $Q2$, we have that $k\geq (n+1)-m$, where $|T[w]\setminus T[y]|=(n+1)-m$. Again, $T\models w\subseteq_k y$.
%%%%

Let $u\subseteq_k v\in\Sigma$ and let $m$ be the smallest approximation for which $\Sigma \vdash x\subseteq_m u$. By the construction of the team, $|T[x]\setminus T[u]|=m$ and by $Q2$, $\Sigma\vdash x\subseteq_{m+k} v$, so $|T[x]\setminus T[v]|\leq m+k$. Now $|T[u]\setminus T[v]|\leq (m+k)-m=k$, hence $T\models u\subseteq_k v$.  

%By condition \ref{cond 3}, values are only removed from $v$ if $\Sigma\vdash v\subseteq_h y$ for some $h\in \mathbb{N}$. If $\Sigma\not\vdash x\subseteq_m u$, for all $m\leq n$, then all $i^l$ values are missing from $T[u]$, so $T\models u\subseteq_0 v$ and $T\models u\subseteq_k v$ follows. 
%On the other hand, if there is some $m\leq n$ such that $\Sigma \vdash x\subseteq_m u$, consider the smallest such $m$. By the construction of the team, $|T[x]\setminus T[u]|=m$ and by  $Q2$, $|T[x]\setminus T[v]|\leq m+k$. Now $|T[u]\setminus T[v]|\leq (m+k)-m=k$, hence $T\models u\subseteq_k v$.  
%
%Further, to not derive a contradiction using $Q2$, we have that $r\geq (n+1)-m-k$, where $|T[u]\setminus T[v]|\leq (n+1)-m-k$. Again, $T\models u\subseteq_r v$.
%
%
\qed\end{proof}
% 
% \edit{Armstrong relations for this seem very tricky, intuition is saying that it is not possible. (we can have counterexample teams for the same x with different y but the same z in between in a trs chain, meaning that we will have either global or local issues with the quantities...)}
% 
%
\vspace{-.52cm} \begin{table}[H]
     \centering
     \caption{Illustration of the counterexample team $T=T_1\cup T_2 \cup T_3$ for $\Sigma\not\vdash x_1x_2\subseteq_{2}y_1y_2$, where $\Sigma=\{x_1x_2\subseteq_{2}w_1w_2,\,
w_1w_2\subseteq_{1}y_1y_2\}$. In the subteams below, any assignment that includes the listed tuple in a dashed row is not in $T_i$, while all remaining assignments to the values $\{i,i+\frac{1}{2}\}$ are in $T_i$ when $i\in\{1,2,3\}$.}
     \label{Q team} \vspace{-.5cm}
     \parbox{.33\textwidth}{
\centering
 \[
\begin{array}{c c cc cc }
\toprule 	
T_1:\, &\,x_1\, x_2 & & w_1\,  w_2 & &y_1\,  y_2\,\\ 
\midrule
 \quad\quad\,\,\tikzmark{start1}& \,&  & 1\,\,\, 1 & &  \,\,\quad\quad\,\,\tikzmark{end1}\\
\quad\quad\,\,\tikzmark{start2}&  \, & &&  & 1\,\,\,1\,
\quad\tikzmark{end2}  \\
 & \,1\,\,\,  1& & \frac{3}{2}\,\,\, 1 & &\dots  \,
 \vspace{5pt}   \\ \bottomrule
\end{array}
\] }
\parbox{.32\textwidth}{
\centering
 \[
\begin{array}{c c cc cc }
\toprule 	
T_2:\, &\,x_1\, x_2 & & w_1\,  w_2 & &y_1\,  y_2\,\\  
\midrule 
\quad\quad\,\,\tikzmark{start3}  & \,&  & 2\,\,\, 2 & & \,\,\quad\quad\,\,\tikzmark{end3}\\
\quad\quad\,\,\tikzmark{start4} & \, & &&  & 2\,\,\,2\,
\quad\tikzmark{end4}
  \\
 & \,{2}\,\,\, {2}& & 2\,\,\, \frac{5}{2} & &\dots  \,
 \vspace{5pt}   \\\bottomrule
\end{array}
\] }
\parbox{.33\textwidth}{
\centering
 \[
 \begin{array}{c c cc cc }
\toprule 	
T_3:\, &\,x_1\, x_2 & & w_1\,  w_2 & &y_1\,  y_2\,\\  
\midrule 
 & \,  \,\,\,  & & 3\,\,\, 3 & & \, \\ 
\quad\quad\,\,\tikzmark{start5} & \, & &&  & 3\,\,\,3\,\quad\tikzmark{end5}
  \\
 & \, 3 \,\,\,  3& & \frac{7}{2}\,\,\, \frac{7}{2} & &\dots  \,
  \vspace{5pt}  \\ 
  \bottomrule
\end{array}
\]}\vspace{-.44cm}
 \end{table}

\tikz[remember picture] \draw[overlay, dashed][dash pattern=on 4pt off 2pt] ([yshift=.35em]pic cs:start1) -- ([yshift=.35em]pic cs:end1);

  \tikz[remember picture] \draw[overlay, dashed][dash pattern=on 4pt off 2pt] ([yshift=.35em]pic cs:start2) -- ([yshift=.35em]pic cs:end2);

 \tikz[remember picture] \draw[overlay, dashed][dash pattern=on 4pt off 2pt] ([yshift=.35em]pic cs:start3) -- ([yshift=.35em]pic cs:end3);

  \tikz[remember picture] \draw[overlay, dashed][dash pattern=on 4pt off 2pt] ([yshift=.35em]pic cs:start4) -- ([yshift=.35em]pic cs:end4);

    \tikz[remember picture] \draw[overlay, dashed][dash pattern=on 4pt off 2pt] ([yshift=.35em]pic cs:start5) -- ([yshift=.35em]pic cs:end5);

    \vspace{-2.21cm}

The arity restriction in \Cref{compl ratio} is due to the counterexample team not necessarily satisfying atoms of larger arity than the conclusion $x\subseteq_n y$. However, we can still use information obtained from such atoms: given any set $\Sigma$, we can modify it to obtain an assumption set $\Sigma^\prime$ by the soundness of $Q4$ such that all atoms have arity at most $|x|$. Now, $\Sigma^\prime\models x\subseteq_n y$ if and only if there is a derivation using the rules $Q1$, $Q2$, $Q3$ and $Q5$. 
%Despite using information from atoms of initially too large arities in the derivations, completeness without the arity restriction does not follow, since the counterexample team do not necessarily satisfy the atoms of larger arity.  
 %where information obtained from atoms $u\subseteq_k v$ with initially a too large arity.  !!!Write why we don't obtain conclusion

 Like for usual inclusion atoms, for finite $\Sigma$, deciding whether $\Sigma\models x\subseteq_n y$ is PSPACE-complete. In \Cref{unary complexity} of the next section, we show that it is in PTIME for unary atoms.

\begin{theorem}[Complexity]
Let $\Sigma\cup\{x\subseteq_n y\}$ be a finite set of quantity approximate inclusion atoms with the atoms in $\Sigma$ having arity at most $|x|$. Deciding whether  $\Sigma\models x\subseteq_n y$ is PSPACE-complete.
\end{theorem}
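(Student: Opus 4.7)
For PSPACE-hardness, I would reduce from the classical implication problem for inclusion dependencies, which is PSPACE-complete \cite{CASANOVA198429}. Given a classical instance $(\Sigma, x \subseteq y)$, form the quantity-approximate instance $(\Sigma_0, x \subseteq_0 y)$ with $\Sigma_0 := \{u \subseteq_0 v : u \subseteq v \in \Sigma\}$. Since atoms of approximation $0$ coincide semantically with the usual inclusion atoms, the two implication questions are equivalent, and the reduction is polynomial. The arity restriction of the theorem can be matched because the classical hardness construction arranges all atoms in $\Sigma$ to have arity at most that of the conclusion (and, if needed, sequences can be padded with fresh variables to equalize arities without changing implication).

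For the PSPACE upper bound, I would adapt the chase-based algorithm that gives the classical result. By \Cref{compl ratio} together with the remark following it (which uses the sound rule $Q4$ to preprocess $\Sigma$ into an arity-bounded set), the semantic question reduces to searching for a derivation using $Q1$, $Q2$, $Q3$, $Q5$. Casanova's algorithm nondeterministically traces a single chase path, storing only a single tuple at a time, and thus runs in polynomial space. In the approximate setting, each state is additionally annotated with an accumulated approximation counter that is incremented by applications of $Q2$; this counter may be truncated at $n+1$ without affecting correctness, because $Q5$ ensures monotonicity in the approximation and any total exceeding $n$ is useless for deriving $x \subseteq_n y$. The augmented algorithm therefore runs in NPSPACE, hence in PSPACE by Savitch's theorem.

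The main obstacle is showing that truncating the approximation counters is sound: one must rule out the possibility of discarding a partial chase path whose combined approximation, once extended, would have been strictly better than what survives the truncation. This should follow from the additivity of $Q2$ together with the observation that tracking the pointwise minimum counter along each chase step is preserved by later composition, an invariant mirroring the classical soundness argument of the chase. As an alternative route, one could derive the upper bound semantically by bounding the size of a minimal counterexample team using the construction in the proof of \Cref{compl ratio} (which scales with $n$ and with $2^l$), and then checking a bounded set of candidate teams in polynomial space.
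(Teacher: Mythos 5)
Your proposal is correct and follows essentially the same route as the paper: hardness via the classical PSPACE-hardness of inclusion dependency implication (using that approximation $0$ coincides with ordinary inclusion), and membership by adapting Casanova et al.'s path-guessing algorithm to additionally record the accumulated approximation along the path. The truncation issue you flag as the main obstacle is in fact immediate, since approximations are non-negative and $Q2$ only adds, so a counter exceeding $n$ can never recover.
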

\begin{proof}
%Note that the arity restriction does not reduce the complexity, since the arity of $n$ is not a given parameter (we can thus reduce from the space complete problem for linear bounded automaton as in !.) 
PSPACE-hardness follows from the result for inclusion atoms in \cite{CASANOVA198429}. Note that the arity restriction does not decrease the complexity, which stems from the permutation rule $Q3$. For membership, we adapt the proof in \cite{CASANOVA198429}:  guess a path (represented by the inclusion symbol) from $x$ to $y$ and record the approximation at each step. Such a path exists with a final approximation of at most $n$ if and only if $\Sigma\models x\subseteq_n y$.
\qed\end{proof}

\section{Ratio approximate inclusion}
We introduce \emph{ratio approximate inclusion atoms} $x\subseteq_p y$, where the permissible number of missing values depends on the size of the team. 

% We first define $\subseteq^*$: $X_Y\models x\subseteq^* y$ if and only if for all $s\in X\setminus Y$ there exists $s^\prime\in X$ such that $s(x)=s^\prime(y)$, or equivalently, $(X\setminus Y)[x]\subseteq X[y]$. The ratio approximate inclusion atom $X\models x\subseteq_p y$ is satisfied in a team if $X_Y\models x\subseteq^* y$, with $Y[x]$ being small enough in relation to the size of the team $X$. 
 
\begin{definition}\label{incl.def.alt.3}
For $p\in  [0,1]\cap\mathbb{Q}$, $T\models x\subseteq_p y$ if and only if $|T[x]\setminus T[y]|\leq p|T|$.
 %if there is a team $Y\subseteq X$ such that $|Y[x]|\leq p|X|$ and $X_Y\models x\subseteq^* y$.
\end{definition}

The approximation $0$ coincides with the usual inclusion atom, and the approximation $1$ allows all values from $x$ to be missing from $y$. 

This section's focus is unary atoms $x\subseteq_p y$ with $|x|=1$, but we still present a more general sound proof system. 

\begin{definition}
The rules for ratio approximate inclusion atoms are $R1$-$R6$.
\begin{enumerate}[align=left]
\item[$(R1)$] $x\subseteq_0 x$.

\item[$(R2)$] If $x\subseteq_p z$ and $z\subseteq_q y$, then $x\subseteq_{p+q} y$.

\item[$(R3)$] If $xyz\subseteq_p uvw$, then $xzy\subseteq_p uwv$.$^*$ \hfill $^*$$|x|=|u|$ and $|y|=|v|$  

\item[$(R4)$] If $xy\subseteq_p uv$, then $x\subseteq_p u$. 

\item[$(R5)$] For $q\geq p$, if $x\subseteq_p y$, then $x\subseteq_q y$.   

\item[$(R6)$] $x\subseteq_1 y$.

\end{enumerate}  
\end{definition}

\begin{lemma}[Soundness]
The rules $R1$-$R6$ are sound.   
\end{lemma}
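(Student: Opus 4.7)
The plan is to verify soundness rule by rule, exploiting the basic fact that $T[x]\setminus T[y]$ behaves like a set difference and that the bound $p|T|$ scales linearly in the two approximation parameters involved. Since the proof system essentially mirrors $Q1$-$Q5$ with the quantity bounds $n$ replaced by proportional bounds $p|T|$ (plus the new rule $R6$), most arguments are minor adaptations of the soundness proof for the quantity system.

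For $R1$, $R3$ and $R5$ the arguments are immediate: $T[x]\setminus T[x]=\emptyset$ covers $R1$, permutation leaves the value-sets unchanged so $T[xyz]\setminus T[uvw] = T[xzy]\setminus T[uwv]$ gives $R3$, and $p|T|\leq q|T|$ gives $R5$. For $R6$, the bound $|T[x]\setminus T[y]|\leq |T[x]|\leq |T|=1\cdot|T|$ is obvious. These I would dispatch in one short sentence each. For $R2$, the argument is identical to the one given for $Q2$: using $|T[x]\setminus T[y]|\leq |T[x]\setminus T[z]|+|T[z]\setminus T[y]|$ and summing the bounds yields $(p+q)|T|$; no cross-terms appear because the bound is linear in the approximation, which is precisely what makes the rule sound.

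The one rule worth spelling out is $R4$ (projection), since it requires a small bijection/injection argument. I would show that the map $a\mapsto (a,s_a(y))$, where $s_a\in T$ is any assignment with $s_a(x)=a$, is an injection from $T[x]\setminus T[u]$ into $T[xy]\setminus T[uv]$: the image clearly lies in $T[xy]$, and if $(a,s_a(y))$ were in $T[uv]$ then there would be some $t\in T$ with $t(u)=a$, contradicting $a\notin T[u]$. Hence $|T[x]\setminus T[u]|\leq |T[xy]\setminus T[uv]|\leq p|T|$, so $T\models x\subseteq_p u$.

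The main (mild) obstacle is just making sure the linearity-in-$p$ argument for $R2$ is actually valid even though the right-hand side now depends on $|T|$: the reason it works is that the bound $p|T|+q|T|=(p+q)|T|$ uses \emph{the same} team $T$ on both sides, so the common factor $|T|$ factors out cleanly. The paper flags in \Cref{conclusion} that such soundness is nontrivial in general, so it is worth noting explicitly that this is where the definition of ratio approximation plays well with additive transitivity. As in the previous lemma, I would present in full only the case for $R2$ (perhaps together with $R4$), and relegate $R1$, $R3$, $R5$, $R6$ to one-line remarks.
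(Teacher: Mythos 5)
Your proposal is correct and its treatment of $R2$ --- the only case the paper writes out --- is identical to the paper's argument (triangle inequality for set differences, with $|T|$ factoring out); the extra injection argument for $R4$ and the one-liners for the remaining rules are all sound. The only cosmetic slip is in $R3$: $T[xyz]\setminus T[uvw]$ and $T[xzy]\setminus T[uwv]$ are not literally equal as sets of tuples, but the coordinate permutation gives a bijection between them, so the cardinalities agree and the argument goes through.
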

\begin{proof}
We show soundness for $R2$. %(note that we cannot prove it like in the dependence case, essentially because we cannot estimate $|Y\restriction x|$ knowing only the upper bound for $|Y\restriction  y|$.) 
Let $T\models x\subseteq_p z$ and $T\models z\subseteq_q y$. Then $|T[x]\setminus T[z]|\leq p|T|$ and $|T[z]\setminus T[y]|\leq q|T|$. Now, $|T[x]\setminus T[y]|\leq |T[x]\setminus T[z]| + |T[z]\setminus T[y]|\leq (p+q)|T|$, and we conclude $T \models x\subseteq_{p+q} y$.
%
%there are subteams $Y_1,Y_2\subseteq X$ such that $X_{Y_1}\models x\subseteq^* y$ and $X_{Y_2}\models y\subseteq^* z$  with $|Y_1[x]|\leq p|X|$ and $|Y_2[z]|\leq q|X|$. Define $Y\subseteq X$ by $Y:=\{s\in X\mid \text{there is no } s^\prime\in X \text{ such that } s(x)=s^\prime(z)\}$. Clearly $Y$ is the smallest team such that $X_Y\models x\subseteq^* z$. Since all values in $X[x]\setminus X[y]$ are accounted for by either $|Y_1[x]|$ or $|Y_2[z]|$, we conclude $|Y[x]|\leq |Y_1[x] |+|Y_2[z]|\leq (p+q)|X|$.  
%
%$|Y\restriction  x|\leq |Y_1\restriction  x|+|Y_2\restriction  y|$
%  follows by the fact that for any value of $x$ that is not a value for $z$, it is counted for by either $n_p$ or $n_q$ (i.e., $Y[x]\subseteq Y_1[x]\cup Y_2[y]$).
%
%Now $X_{Y_1\cup Y_2}\models x\subseteq^* y$, and $p^\prime +q^\prime\leq (p+q)|X|$. 
%
% Let $n_p=|Y_1[x]|$, $n_q=|Y_2[y]|$ and $n_r=|Y[
%  x]|$, where $Y\subseteq X$ consists of all lines with $s(x)$ as a value of $x$ whenever there is no $s^\prime\in X$ such that $s(x)=s^\prime(z)$. Clearly $Y$ is the smallest team such that $X_Y\models x\subseteq^* z$. Now $n_r\leq n_p+n_q$ %$|Y\restriction  x|\leq |Y_1\restriction  x|+|Y_2\restriction  y|$
%  follows by the fact that for any value of $x$ that is not a value for $z$, it is counted for by either $n_p$ or $n_q$ (i.e., $Y[x]\subseteq Y_1[x]\cup Y_2[y]$).
 \qed\end{proof}

 We prove completeness for finite sets of unary atoms, allowing the omission of the rules $R3$ and $R4$. The proof is inspired by the corresponding one in \cite{Vaananen2017}. See \Cref{R team} for an instance of the counterexample team.
 
 %We closely follow the proof used for approximate dependence in \cite{Vaananen2017}, due to the transitivity rules in both settings being similar.  We define $M_x(u)$ to be the smallest approximation $a$ for which $\Sigma\vdash u\subseteq_a x$, where $\Sigma$ is the assumption set and $x\subseteq_p y$ is the conclusion in the completeness proof. Let $A$ be the set of all $M_x(u)$ where $u$ is a variable appearing in $\Sigma\cup\{x\subseteq_p y\}$.

 \begin{theorem}[Completeness]
     Let $\Sigma\cup \{x\subseteq_p y\}$ be a finite set of unary inclusion atoms. If $\Sigma\models x\subseteq_p y$, then $\Sigma\vdash x\subseteq_p y$ using rules from $\{R1,R2,R5,R6\}$. \label{R compl}
 \end{theorem}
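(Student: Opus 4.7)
The plan is to prove completeness by contraposition, mirroring \Cref{compl ratio}: assuming $\Sigma \not\vdash x\subseteq_p y$, I exhibit a team $T$ that satisfies every atom of $\Sigma$ yet refutes $x\subseteq_p y$. Because ratio approximations scale with team size, the construction replaces the disjoint unit subteams used in the quantity case with a single team whose size $N$ is chosen so that all relevant fractions become integers.

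For every variable $w$ occurring in $\Sigma \cup \{x\subseteq_p y\}$, set $\mu(w):=\min\{q\in[0,1]\cap\mathbb{Q} : \Sigma\vdash x\subseteq_q w\}$. The minimum exists: $R6$ places $1$ in the set, and any derivation corresponds to a finite $R2$-chain from $x$ to $w$ whose cost is a rational sum of costs of atoms in $\Sigma$. By $R1$, $\mu(x)=0$, and $R2$ yields the triangle inequality $\mu(v)\le\mu(u)+q$ whenever $u\subseteq_q v\in\Sigma$. The assumption $\Sigma\not\vdash x\subseteq_p y$ combined with $R5$ gives $\mu(y)>p$. Choose $N\in\mathbb{N}$ divisible by all denominators of the $\mu(w)$, fix pairwise distinct values $a_1,\dots,a_N,b$, and define $T=\{s_1,\dots,s_N\}$ by
\[
s_j(w):=\begin{cases} a_j & \text{if } j>\mu(w)N,\\ b & \text{otherwise.}\end{cases}
\]
A direct calculation gives $T[x]=\{a_1,\dots,a_N\}$, $T[w]=\{a_j : j>\mu(w)N\}\cup\{b\}$ for $w\neq x$, and hence
\[
|T[u]\setminus T[v]|=\max\bigl(0,\ \mu(v)-\mu(u)\bigr)\cdot N,
\]
with the boundary $u=x$ giving $|T[x]\setminus T[v]|=\mu(v)N$.

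From this, $|T[x]\setminus T[y]|=\mu(y)N>pN=p|T|$ refutes $x\subseteq_p y$, while for every $u\subseteq_q v\in\Sigma$ the triangle inequality $\mu(v)-\mu(u)\le q$ gives $|T[u]\setminus T[v]|\le qN$, so $T\models u\subseteq_q v$. The main subtlety is engineering the "missing" index sets to be nested across all variables: placing the sink value $b$ uniformly at positions $1,\dots,\mu(w)N$ is what collapses $|T[u]\setminus T[v]|$ from the naive $\mu(u)N+\mu(v)N$ down to the tight $(\mu(v)-\mu(u))N$ that $R2$ supplies. Any less coordinated choice (private fresh values per variable, say) would inflate the differences by a "private mass" of $\mu(u)N$ and break the bound required for atoms $u\subseteq_q v$ with small $q$ but large $\mu(u),\mu(v)$.
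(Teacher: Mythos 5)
Your overall strategy is the same as the paper's: a single team whose size clears the denominators, with each variable $w$ receiving a ``sink'' value on an initial block of indices of length proportional to the minimal derivable approximation $\mu(w)=\min\{q:\Sigma\vdash x\subseteq_q w\}$, so that the value sets are nested and the differences $|T[u]\setminus T[v]|$ are controlled by the triangle inequality $\mu(v)\le\mu(u)+q$ coming from $R2$. The existence of the minimum, the refutation via $\mu(y)>p$, and the verification of the atoms of $\Sigma$ all match the paper's argument.

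However, there is an off-by-one flaw in the construction that breaks the case $q=0$. With your definition, $s_j(w)=b$ only for $j\le\mu(w)N$, so whenever $\mu(w)=0$ (in particular for $w=x$ itself) the sink value $b$ does not occur in $T[w]$ at all. Hence for an atom $u\subseteq_q v\in\Sigma$ with $\mu(u)>0$ and $\mu(v)=0$ you get $b\in T[u]\setminus T[v]$, so $|T[u]\setminus T[v]|=1$ rather than the claimed $\max\bigl(0,\mu(v)-\mu(u)\bigr)\cdot N=0$. If $q=0$ this means $T\not\models u\subseteq_0 v$, and $T$ fails to be a model of $\Sigma$; a concrete failure is $\Sigma=\{z\subseteq_0 x\}$ with conclusion $x\subseteq_{1/2}y$, where your team has $T[z]=\{b\}\not\subseteq T[x]=\{a_1,\dots,a_N\}$ even though $\Sigma\not\models x\subseteq_{1/2}y$. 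The paper avoids exactly this by taking $n=N+1$ assignments and placing the sink value at index $1$ for \emph{every} variable (its condition is $i\le q'+1$, not $i\le q'$), so the sink belongs to every value set and cancels in every difference. Your proof is repaired by the same one-index shift; as written, though, the key identity for $|T[u]\setminus T[v]|$ is false and the verification that $T\models\Sigma$ does not go through.
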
 
 \begin{proof}
 Suppose that $\Sigma\not\vdash x\subseteq_p y$. By $R1$, $R5$ and $R6$, we can assume that $x\neq y$ and $p<1$. 
Let $n-1$ be the least common denominator for the approximations appearing in $\Sigma\cup \{x\subseteq_p y\}$. For each approximation $q$, let $q^\prime$ be such that $q=\frac{q'}{n-1}$. 

We build the counterexample team $T=\{s_1,\dots s_n\}$ according to the smallest approximation $q$ for which $\Sigma\vdash x\subseteq_{q} w$, by 
%
% \begin{enumerate}
% \item $s_i(v)\in\{i,1\}$ . \label{Rcond1}
   %  \item $s_i(y)=1$ for $i=1,\dots,p'+2$ and else $s_i(y)=i$. \label{Rcond2}
    %If $\Sigma\vdash w\subseteq_r y$ and there is no $q'\leq p'$ such that $\Sigma\vdash x\subseteq_{q} w$, let $s_i(w)=s_i(y)$. 
 %     %If there is some $q'\leq p'$ such that $\Sigma\vdash x\subseteq_{q} w$, then for the smallest such $q'$ set $s_i(w)=1$ for $i\leq q'+1$, and $s_i(w)=i$ for $i>q'+1$.  \label{Rcond3}   
 %   \item   \label{Rcond3} 
 % \end{enumerate}
 % %$$ for $$ and $s_i(w)=i$ for $i>q'+1$.
%
 \[   
s_i(w) = 
     \begin{cases}
       1 &\quad\text{if } i\leq q'+1,\\
       i &\quad \text{otherwise.} \\
     \end{cases}
\]

Since $\Sigma\vdash x\subseteq_0 x$ by $Q1$ and $\Sigma\not\vdash x\subseteq_p y$, we have that $|T[x]\setminus T[y]|\geq p'+1$. Now $|T[x]\setminus T[y]|/n\geq\frac{p'+1}{n} > \frac{p'}{n-1}=p$, thus $T\not\models x\subseteq_p y$. 

It remains to show that for $u\subseteq_r v\in\Sigma$, $T\models u\subseteq_q v$. 
Consider the smallest $q$ such that $\Sigma \vdash x\subseteq_q u$. By the construction of the team, $|T[x]\setminus T[u]|=q'$ and by $Q2$ we have that $\Sigma\vdash x\subseteq_{q+r} v$, hence $|T[x]\setminus T[v]|\leq q'+r'$. Now $|T[u]\setminus T[v]|\leq (q'+r')-q'=r'=r(n-1)< rn$, so $T\models u\subseteq_q v$.  
%
%Let $u\subseteq_q v\in\Sigma$. By condition \ref{Rcond3}, values from $\{2,3,\dots,p'+2\}$ are only removed from $v$ if $\Sigma\vdash v\subseteq_h y$ for some $h\in \mathbb{Q}$. If $\Sigma\not\vdash x\subseteq_a u$ for all $a\leq n$, then all values that can be missing are not in $T[u]$, so $T\models u\subseteq_0 v$. 
%On the other hand, if there is some $a\leq p$ such that $\Sigma \vdash x\subseteq_a u$, consider the smallest such $a$. By the construction of the team, $|T[x]\setminus T[u]|=a'$ and by $Q2$, $|T[x]\setminus T[v]|\leq a'+q'$. Now $|T[u]\setminus T[v]|\leq (a'+q')-a'=q'=q(n-1)< qn$, hence $T\models u\subseteq_q v$.  
 \qed\end{proof}

 Restricted to unary atoms, each of the decision problems for the quantity- and ratio approximate inclusion atoms is in PTIME.
 
\vspace{-.35cm}\begin{table}[H]
     \centering
     \caption{Counterexample team for $\Sigma\not\vdash x\subseteq_{\frac{1}{2}}y$ with $\Sigma=\{x\subseteq_{\frac{1}{4}}w,\, w\subseteq_{\frac{1}{2}}y\}$.}
     \label{R team}\vspace{-.59cm}
 \[
\begin{array}{c ccc cc}
\toprule 	
\quad\,\, &\, x \, &\, w\,&\,  y \,& \dots \\ 
\midrule 

s_1& 1 & 1& 1
  & \dots
  \\
s_2&   2 & 1& 1
  & \dots
  \\
s_3&3 & 3& 1
 & \dots
  \\
 s_4&   4 & 4&1
  & \dots
  \\
s_5&      5& 5& 5
 & \dots
  \\
\bottomrule 
\end{array}
\] \vspace{-.7cm}
 \end{table}

% The usual unary algorithm is in linear time \cite{CASANOVA198429} ?where though?) and is equivalent to the graph reachability problem which can be solved using the breadth first algorithm in linear time $O(|Var|+|\Sigma+1|)$

 \begin{theorem}[Complexity] \label{unary complexity}
     Let $\Sigma\cup \{x\subseteq_p y\}$ be a finite set of either unary quantity approximate inclusion atoms or unary ratio approximate inclusion atoms. Deciding whether $\Sigma\models x\subseteq_p y$ is in PTIME. 
 \end{theorem}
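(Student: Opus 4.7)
The plan is to reduce both decision problems to a single-source shortest-path computation in a weighted directed graph. Given $\Sigma$, I would construct a graph $G_\Sigma$ whose vertices are the variables appearing in $\Sigma\cup\{x\subseteq_p y\}$ and whose edges are $u\xrightarrow{q}v$ for every atom $u\subseteq_q v\in\Sigma$. All edge weights are non-negative rationals (in the quantity case, non-negative integers).

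I would then establish, by induction on derivations using only the unary-relevant rules $\{Q1,Q2,Q5\}$ (respectively $\{R1,R2,R5,R6\}$), that $\Sigma\vdash u\subseteq_q v$ holds precisely when either $u=v$, there is a directed path from $u$ to $v$ in $G_\Sigma$ of total weight at most $q$, or (in the ratio case only) $q\geq 1$. Rule $Q1/R1$ corresponds to the zero-length path, $Q2/R2$ to path concatenation with additive weights, $Q5/R5$ to relaxation of the approximation, and $R6$ to the trivial upper bound~$1$.

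Combining this correspondence with the completeness results of \Cref{compl ratio} (restricted to unary arity, where $Q3$ and $Q4$ become vacuous) and \Cref{R compl}, deciding $\Sigma\models x\subseteq_n y$ (respectively $\Sigma\models x\subseteq_p y$) reduces to testing whether the shortest-path distance $d(x,y)$ in $G_\Sigma$ satisfies $d(x,y)\leq n$ (respectively $d(x,y)\leq p$ or $p\geq 1$). Since weights are non-negative, $d(x,y)$ can be computed in polynomial time by Dijkstra's algorithm, yielding a PTIME decision procedure.

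The only delicate point is to verify that the derivable approximations are exactly those realized by paths; I expect this to be routine in the unary setting because the permutation and projection rules $Q3/R3$ and $Q4/R4$ play no role when all atoms have arity one, so no derivation can shortcut below the graph-theoretic shortest path.
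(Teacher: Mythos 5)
Your proposal is correct and follows essentially the same route as the paper: both reduce the implication problem to a single-source shortest-path computation on the directed graph whose nodes are the variables, whose edges come from the atoms in $\Sigma$, and whose weights are the approximations, solved by Dijkstra's algorithm. The paper states this more tersely, while you additionally spell out the (routine) correspondence between derivations via $\{Q1,Q2,Q5\}$ (resp.\ $\{R1,R2,R5,R6\}$) and weighted paths, which is the implicit justification the paper relies on.
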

\begin{proof}
Use Dijkstra's algorithm with time complexity $O(|Var|^2)$, where $Var$ is the set of all variables appearing in $\Sigma\cup \{x\subseteq_p y\}$. The algorithm solves the single-source shortest path problem for directed graphs with non-negative weights, where the variables correspond to the nodes, the inclusion symbols to the directed edges, and the approximations to the weights.  
% which is not easier than the single-source algorithms mentioned next. Use e.g., the Bellman–Ford Algorithm (single-source) with time complexity $O(|Var||\Sigma+1|)$. Or Dijkstra´s algorithm(single-source): $O(|Var|^2)$. 
% For unary approximate atoms (works for both definitions) the complexity increases to polynomial time, since the problem is equivalent to the single-pair shortest path problem for directed graphs with nonnegative weights, which is not easier than the single-source algorithms mentioned next. Use e.g., the Bellman–Ford Algorithm (single-source) with time complexity $O(|Var||\Sigma+1|)$. Or Dijkstra´s algorithm(single-source): $O(|Var|^2)$. !The first one is better here, \edit{is it? Consider $x\subseteq y$ and $y\subseteq x$.} \edit{The problem is solved by the algorithm (Good!), but perhaps it could be easier... unless all possible weights are enumerated I don't see how to reduce in the other direction. }
\qed\end{proof}
 \section{Conclusion and future work}\label{conclusion}
The paper begins the proof-theoretical study of approximate union-closed atoms by introducing and axiomatizing quantity and ratio approximate inclusion atoms. Complete axiomatizations are provided for quantity approximate inclusion atoms where the assumptions' arities do not exceed that of the conclusion, and for unary ratio approximate inclusion atoms. The decision problems for both unary versions are in PTIME. We identify three directions of future work. %, allowing for efficient implementations.
 
The \emph{$k$-anonymity atom} $x\Upsilon^k y$ is union-closed, and expresses that the value of $x$ appears together with at least $k$ values of $y$. A quantity approximate version could allow $n$ such $x$ values to not satisfy the $k$-anonymity condition. We observe that the usual rules for $k$-anonymity atoms in \cite{Vaananen2022}, augmented with the approximations, are sound. Completeness results and connections to the practical $k$-anonymity algorithms for databases (see, e.g., \cite{Kenig}) are left to be explored.

 Extending the completeness result in \Cref{R compl} beyond \emph{unary} ratio approximate inclusion is a natural pursuit, while expecting an increase in computational complexity. One approach is to develop other counterexample teams to the ones in the literature \cite{CASANOVA198429,haggblom2025}. Another is to modify the definitions, e.g., by forcing \emph{locality} by defining $T\models x\subseteq_p y$ iff $|T[x]|\leq p|T[x]|$, or $T\models x\subseteq_p y$ iff $|T[x]|\leq p|T[xy]|$. However, transitivity is not sound for either: $\{x\subseteq_p z, z\subseteq_q y\}\not\models x\subseteq_r y$ for any $p,q>0$ and $r< 1$. The search for other suitable definitions is therefore ongoing.

The languages considered in this paper of quantity and ratio approximate inclusion atoms could be enriched by connectives and quantifiers to study approximate variants of the union-closed \emph{inclusion logic} from \cite{GALLIANI201268}. Approximation operators could also be applied to first-order formulae, an approach taken in \cite{Durand} for the downward-closed dependence logic, with connections to \emph{multiteams} that allow modelling databases with duplicated rows. %Enriching the models as in \cite{Durand} is also a natural consideration, by using 
 
 %In this setting, enriched models can also be considered, such as multi-teams, probabilistic teams etc., le
 
 %We can also consider the approximate atoms in enriched models, such as multitieams (!!) and semirings (!!).
 
 %and the completeness proof likely requires some restrictions as in the case for the quantity approximate inclusion atoms. 
 
 %values   We recall their semantics, 
% $$T\models x\Upsilon y \text{ iff for all } s\in T \text{ there is } s'\in T \text{ s.t. } s(x)=s'(x) \text{ and } s(y)\neq s'(y).$$
 
% The \emph{quantity approximate anonymity} atom could thus be defined to allow at most $n$ many $s\in T$ to not have a $s'\in T$ according to the definition above. The usual rules for anonymity in \cite{Vaananen2022} with the approximations added would be sound, and the completeness proof likely requires some restrictions as in the case for the quantity approximate inclusion atoms. 

\begin{credits}
\subsubsection{\ackname}

The author wants to thank Åsa Hirvonen for many helpful suggestions regarding the paper, and Fan Yang for useful discussions.

\subsubsection{\discintname}
The author has no competing interests to declare that are
relevant to the content of this article.
\end{credits}

 \bibliographystyle{splncs04}
\bibliography{mybibliography}
%

% !!!
% TODO

% -if room, note the complexity results of appr dependence to argue non-triviality in this case

% - fix compl. proof for reals ?? comment on similarity to our proof or not??? 

\end{document}